\documentclass[letterpaper, 10 pt, conference]{ieeeconf}  

\IEEEoverridecommandlockouts                              

\usepackage{graphicx} 
\usepackage{amsmath} 
\usepackage{amssymb}  
\usepackage{hyperref}
\usepackage{soul}
\usepackage{float}
\usepackage{bm}
\usepackage{cite}
\usepackage{booktabs}

  \usepackage{amsthm}
\theoremstyle{definition}

\theoremstyle{definition}

\theoremstyle{plain}

\theoremstyle{plain}
\newtheorem{propo}{Proposition}
\theoremstyle{definition}
\theoremstyle{definition}

\theoremstyle{definition}

\newtheorem{remark}{Remark}

\usepackage[table,xcdraw]{xcolor}
\usepackage{easyReview}

\title{\LARGE \bf
Robust Variable-Horizon MPC for Landing a Multirotor UAV on a Moving Platform
}

\author{}
\author{Sander Doodeman$^{1}$, Niels Berkers$^{1}$, Paula Chanfreut$^{1}$, Elena Torta$^{1}$, and Duarte Antunes$^{1}$
\thanks{$^{1}$ Department of Mechanical Engineering, TU/e, Eindhoven University of Technology, Eindhoven, The Netherlands}%
\thanks{Email: {\tt s.doodeman@tue.nl}}%
\thanks{This research is supported by the Renewable Energy Transition Topsector Energy subsidies (HER+22-02-03430912) from the Dutch Ministry of Economic Affairs.}%
}

\begin{document}

\maketitle
\thispagestyle{empty}
\pagestyle{empty}

\begin{abstract}
Landing a multirotor Unmanned Aerial Vehicle (UAV) on a moving platform is challenging because a UAV is underactuated and the desired landing state is generally a non-equilibrium state.
Shrinking- and variable-horizon approaches are promising for reaching such non-equilibrium targets, but often lack robustness to disturbances.
Robust Variable-Horizon Model Predictive Control (VH-MPC) addresses this limitation but is computationally complex for a high-dimensional system such as a multirotor UAV.

This paper presents a computationally efficient robust Variable-Horizon MPC method for reaching non-equilibrium targets under bounded disturbances.
The online horizon search is restricted to a neighborhood of the previously selected horizon, while maintaining recursive feasibility under bounded disturbances.
A fixed robust positively invariant tube provides horizon-independent constraint tightening.
By exploiting the differential flatness property of a UAV, this approach is applied to a decoupled linearized system, enabling real-time implementation on an onboard Raspberry Pi 5 at 20 Hz.
Simulations show that restricting the horizon search substantially reduces computation with limited impact on the objective, while real-time Gazebo simulations demonstrate the robustness of the method.
Physical experiments demonstrate successful landing of a multirotor UAV on a moving Stewart platform.

\end{abstract}

\section*{Supplementary Material}
\noindent\textbf{Video:} \href{https://youtu.be/qeYP8OFipIc}{\texttt{youtu.be/qeYP8OFipIc}} \\
\noindent\textbf{Repository:} \href{https://gitlab.tue.nl/tue-waari/rvhmpc}{\texttt{gitlab.tue.nl/tue-waari/rvhmpc}}

\section{INTRODUCTION} \label{sec:intro}
The application of autonomous UAVs has the potential to transform offshore infrastructure inspection.
Assets such as wind turbines, ships, and oil and gas platforms are still inspected mostly by manual inspection, at high cost and in hazardous conditions \cite{fun_sang_cepeda_exploring_2023}.
UAV-based methods, including aerial radiography \cite{meere_x-ray_2025}, can perform the same work faster and without exposing people to that hazard.
However, the further adoption of these platforms is severely restricted by a limited time-of-flight, which requires time-optimal planning and fast mission execution.
Enabling UAVs to safely land on a nearby vessel for recharging during extensive inspections greatly increases the flexibility in task execution.
Having a fast and reliable way to land during a variety of weather conditions is therefore essential, especially when the battery is running low, leaving little time for the landing maneuver.

However, landing successfully on a moving vessel is generally difficult, especially under harsh sea conditions, when the vessel experiences pronounced roll and pitch motions.
A standard multirotor UAV has four independent control inputs (total thrust and three rotation moments), but six degrees of freedom (DOFs) and is therefore underactuated.
Therefore, the 6-DOF landing platform on a vessel is a non-equilibrium target, creating a challenging control problem.
A trajectory tracking controller is therefore not a feasible solution.
The presence of disturbances, such as wind disturbances and the ground effect \cite{sanchez-cuevas_characterization_2017, kan_analysis_2019}, makes landing on a moving vessel an even more complicated task.

\begin{figure}[t]
    \centering
    \includegraphics[width=0.77\linewidth]{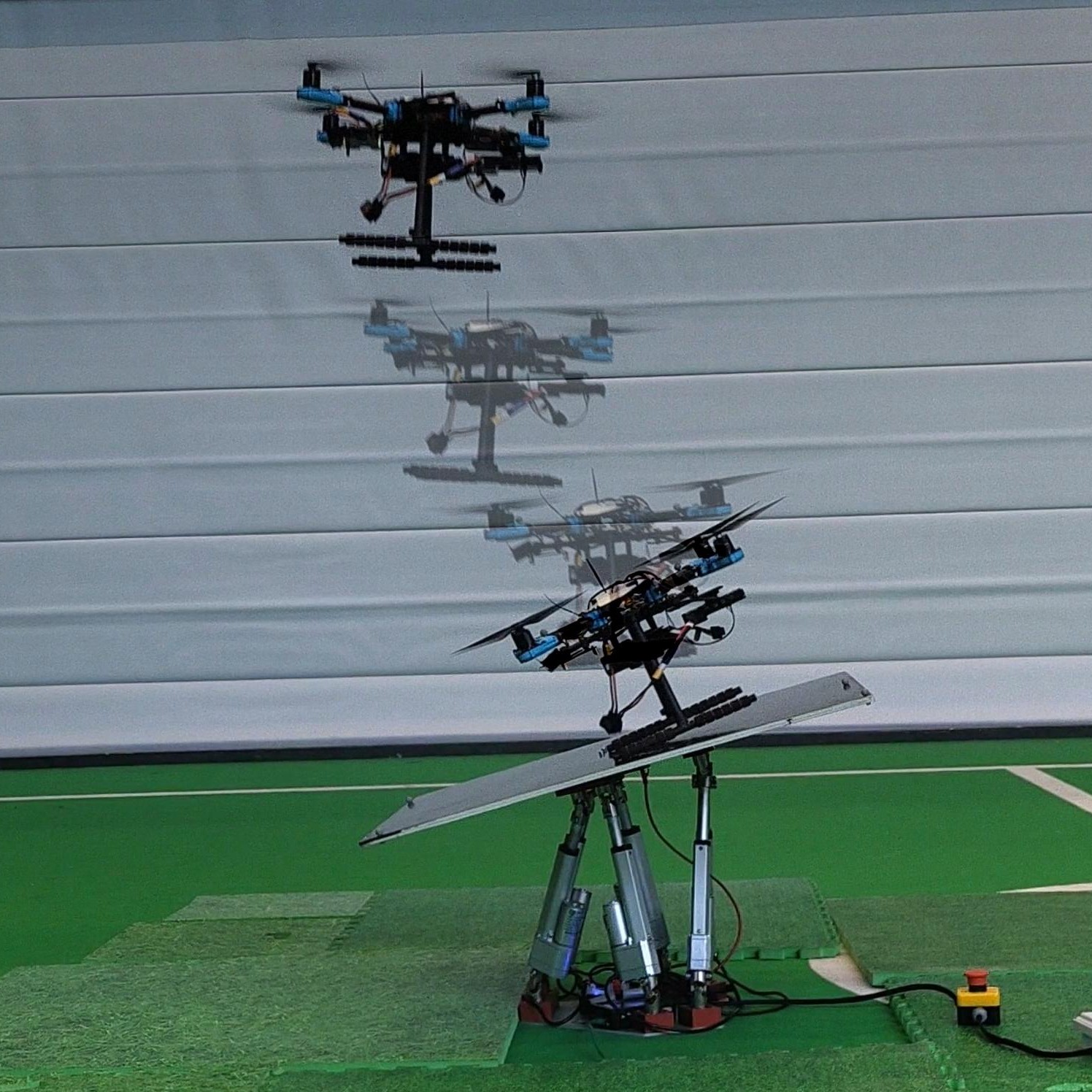}
    \caption{Experimental setup where a Holybro X500 successfully lands on a 6-DOF Acrome Stewart Stage mimicking vessel motion}
    \vspace{-10pt}
    \label{fig:experimental_setup}
\end{figure}

One solution is to equip the UAV with tilting rotors, allowing 6-DOF actuation \cite{lee_autonomous_2024, iida_adaptive_2025}, or to design a custom actuated landing platform to cancel the motion of the vessel \cite{liu_research_2024}.
However, these solutions require additional hardware and a custom UAV with an added payload for actuation, or rely on a communication link to the vessel.
Therefore, we focus on solutions within the UAV control framework.

Existing solutions include a dynamic cost Model Predictive Control (MPC) method \cite{prochazka_model_2024}, which starts prioritizing orientation of the UAV over tracking the position as it gets closer to the moving vessel.
Similarly, \cite{stephenson_distributed_2024} and \cite{gupta_landing_2023} use an MPC position tracking controller to land on a moving vessel, while minimizing tilt during landing.
A disadvantage of these position tracking methods is that they always introduce an error when reaching the non-equilibrium point, and an appropriate descent rate has to be picked to make sure the UAV is not too low before reaching the right horizontal position.
Greer and Sultan propose a shrinking horizon MPC approach to land a helicopter on a moving platform, allowing it to reach a non-equilibrium point \cite{greer_shrinking_2020}.
However, this method is only verified in simulation.
In \cite{vlantis_quadrotor_2015}, landing on a moving mobile robot with an inclined surface is achieved by introducing a terminal set at the platform within an MPC formulation and optimizing the horizon length.
However, all methods in \cite{prochazka_model_2024, greer_shrinking_2020, vlantis_quadrotor_2015} do not consider disturbances, making them vulnerable, for example, to wind forces acting on the UAV.
This is especially troublesome as recursive feasibility is then not assured; disturbances may drive the system into regions where the MPC optimization problem is no longer guaranteed to remain feasible.
This results in the undesired need to have a fallback scenario (e.g., switching to a simpler controller) to provide the control input, and the properties promised by MPC are lost.

Deep reinforcement learning methods have been applied as well \cite{rodriguez-ramos_deep_2019, habas_ceilings_2025}.
Although these methods are promising, they suffer from a lack of robustness guarantees, which is essential when the UAVs carry expensive inspection equipment, such as cameras or X-ray devices, e.g., for offshore asset inspections \cite{meere_x-ray_2025, zhang_inspection_2024}.
As multirotor UAVs are underactuated, matching the position and orientation of a vessel requires optimizing the trajectory of the UAV over a given time, rather than tracking a state.
VH-MPC does this and also optimizes the arrival horizon online.
However, in the presence of disturbances, VH-MPC loses recursive feasibility.
To guarantee robustness against bounded disturbances, \cite{richards_robust_2006} combines VH-MPC with tube MPC, also known as robust MPC, providing recursive feasibility.
The tube MPC formulation introduces additional computational complexity through the horizon-dependent propagation of disturbance sets and the resulting tightened constraints.
This can be avoided by employing a single robust positively invariant (RPI) set as the tube cross-section, yielding horizon-independent constraint tightening \cite{mayne_robust_2005}.

The optimization over the full horizon length in VH-MPC can result in high computational costs, as multiple candidate horizons must be considered online \cite{richards_robust_2006}.
This is limiting for a multirotor UAV, where a body rate controller must run at a high rate to match the tilt of the UAV with the moving platform while landing.
Moreover, such a high control rate requires small step sizes, requiring a longer horizon length with longer computation times to still include the full landing procedure.

To the best of the authors' knowledge, no robust variable-horizon approach exists that is efficient enough to apply to a real-time multirotor UAV landing on a 6-DOF moving platform in the presence of bounded disturbances.
This paper closes that gap for a 6-DOF vessel.
The contributions of this work are threefold.
First, we develop a computationally efficient Robust Variable-Horizon MPC formulation for robustly reaching a non-equilibrium target under bounded disturbances.
The computational cost of the proposed variable-horizon approach is reduced by restricting the online horizon search to a neighborhood of the previously selected horizon.
In addition, using a fixed RPI tube makes the constraint tightening independent of the prediction horizon, thereby simplifying the optimization while retaining recursive feasibility guarantees.
Second, we develop a practical control architecture that applies a robust VH-MPC formulation to the landing of a UAV on a 6-DOF moving platform, which integrates an outer-loop horizon selection procedure for a decoupled linearized UAV model.
Differential flatness allows decoupling of the linearized system \cite{mellinger_minimum_2011}, which makes the calculation of the fixed RPI tube feasible, and it also allows solving the robust VH-MPC control inputs under a high control rate.
Third, we experimentally demonstrate the effectiveness of the resulting controller for landing maneuvers.
The experiments validate both the real-time computational performance and the ability to robustly reach a non-equilibrium landing state.

This paper is organized as follows: Section \ref{sec:preliminaries} provides the preliminaries.
In Section \ref{sec:rvhmpc}, we present the robust VH-MPC method to reach a non-equilibrium point under bounded disturbances.
Section \ref{sec:uav_adaptation} elaborates on the framework that applies the robust VH-MPC method to a multirotor UAV to land on a moving platform in real-time.
The experiments that were performed to validate the introduced framework are described in Section \ref{sec:experiments}.
Finally, Section \ref{sec:conclusion} provides concluding remarks.

\section{PRELIMINARIES} \label{sec:preliminaries}
Let us denote the Minkowski sum and Minkowski (Pontryagin) set difference as $\oplus$ and $\ominus$, respectively.
Specifically, given two sets $\mathcal{A}$ and $\mathcal{B}$, $\mathcal{A} \oplus \mathcal{B} = \{ \bm{a} + \bm{b} : \bm{a} \in \mathcal{A}, \bm{b} \in \mathcal{B} \}$ and $\mathcal{A} \ominus \mathcal{B} = \{ \bm{a} : \{\bm{a}\} \oplus \mathcal{B} \subseteq \mathcal{A} \}$.

Throughout this paper, we consider the discrete linear time-invariant system
\begin{equation} \label{eq:real_system}
    \bm{x}(k+1) = A \bm{x}(k) + B \bm{u}(k) + \bm{w}(k),
\end{equation}
where $\bm{x}(k) \in \mathbb{R}^n$, $\bm{u}(k) \in \mathbb{R}^m$, and $\bm{w}(k) \in \mathbb{R}^n$ are the system state, control input, and disturbance at time $k\in\mathbb{N}_+$, respectively.
System matrices $A$ and $B$ define the linear dynamics of the system.
The state and input constraints are described by the convex sets $\mathcal{X}\subseteq\mathbb{R}^{n}$ and $\mathcal{U}\subseteq\mathbb{R}^{m}$, respectively.
The disturbance is assumed to be bounded by the compact convex set $\mathcal{W}\subseteq\mathbb{R}^{n}$, where $0 \in \mathcal{W}$.
Within the tube MPC formulation \cite{mayne_robust_2005}, the trajectory of the nominal system is optimized, and the implemented control action is derived from the first element of the resulting nominal input sequence.
Specifically, the nominal system is given by
\begin{equation} \label{eq:nominal_system}
    \bm{z}(k+1) = A \bm{z}(k) + B \bm{v}(k),
\end{equation}
where $\bm{z}(k) \in \mathbb{R}^n$ and $\bm{v}(k) \in \mathbb{R}^m$ are the nominal states and inputs, respectively.
We assume that the system is controllable and thus a linear feedback gain $K$ can be derived such that $(A + BK)$ is Schur.
Then, the control input for the real system \eqref{eq:real_system} is given by
\begin{equation}
    \bm{u}(k) = \bm{v}(k) + K(\bm{x}(k) - \bm{z}(k)) = \bm{v}(k) + K\bm{e}(k),
\end{equation}
where $\bm{e}(k)=\bm{x}(k)-\bm{z}(k)$ denotes the error between the real and nominal states.
The error dynamics evolve as
\begin{equation}
    \bm{e}(k+1) = (A + BK) \bm{e}(k) + \bm{w}(k).
\end{equation}
Then, as proved in \cite{mayne_robust_2005} under the underlying assumptions mentioned above, an RPI set $\mathcal{E}$ exists such that $(A+BK)\mathcal{E}\oplus\mathcal{W}\subseteq\mathcal{E}$.
Furthermore, we assume that $\mathcal{X} \ominus \mathcal{E}$ and $\mathcal{U} \ominus K\mathcal{E}$ are nonempty.

\section{ROBUST VARIABLE-HORIZON MPC} \label{sec:rvhmpc}
\begin{figure*}
    \centering
    \vspace{5pt}
    \includegraphics[width=0.8\linewidth]{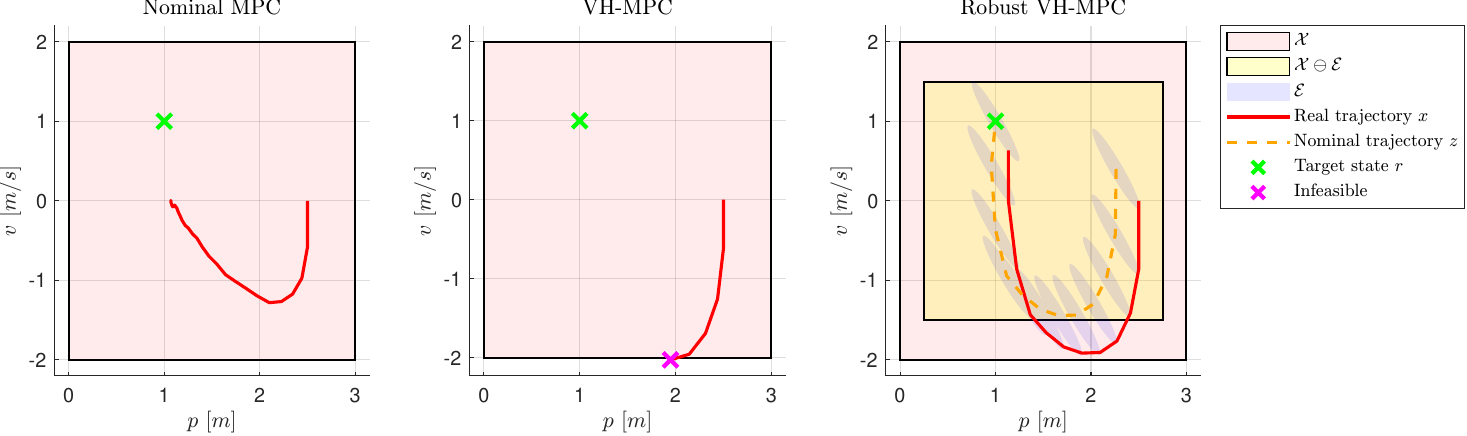}
    \caption{Trajectories for a simple double integrator system from initial state $[2.5\,\,\,\, 0]^\top$ to non-equilibrium target state $[1\,\,\,\, 1]^\top$ (green cross) obtained with robust VH-MPC, VH-MPC, and nominal MPC approaches under bounded disturbances.
    The light red boxes indicate the constraint set $\mathcal{X}$, the yellow box shows the shrunken set $\mathcal{X}\ominus\mathcal{E}$, and the light blue regions represent tube $\mathcal{E}$.
    Finally, the magenta cross shows an infeasible state due to state or input bound constraints. See the full implementation at \href{https://gitlab.tue.nl/tue-waari/rvhmpc}{\texttt{gitlab.tue.nl/tue-waari/rvhmpc}}}
    \vspace{-5pt}
    \label{fig:vh_explanation}
\end{figure*}
This section presents a computationally efficient robust VH-MPC method that enables robustly reaching a non-equilibrium point, using an efficient horizon search.

Using robust VH-MPC is motivated by the fact that nominal MPC fails to reach a non-equilibrium target, as Fig. \ref{fig:vh_explanation} illustrates.
For an objective function that, over the full horizon length, penalizes the error in position and velocity, as well as the input, the nominal MPC cannot attain the target state.
VH-MPC could reach the target in the absence of disturbances but becomes infeasible since it is pushed outside the constraint set $\mathcal{X}$ due to the given bounded disturbances.
Robust VH-MPC combines VH-MPC with tube MPC to guarantee recursive feasibility, making it a suitable choice for landing a UAV on a moving platform under bounded disturbances.
This is why we base our formulation on \cite{richards_robust_2006}.

At each robust VH-MPC update instant $k$, the predicted nominal state and input sequences are denoted by $Z_k = [\bm{z}_k(0), \bm{z}_k(1), ..., \bm{z}_k(N_k)]$ and $V_k = [\bm{v}_k(0), \bm{v}_k(1), ..., \bm{v}_k(N_k-1)]$, respectively, with prediction horizon $N_k$.
Given the target state $\bm{r}(k) \in \mathbb{R}^n$ at step $k$, we define the following nominal cost function, similar to \cite{richards_robust_2006}:
\begin{gather} \label{eq:cost_fcn}
\begin{aligned}
    J(Z_k, V_k, N_k) = &\sum_{j=0}^{N_k}||\bm{z}_k(j) - \bm{r}(k+j)||^2_{Q} \\ &+ \sum_{j=0}^{N_k-1} ||\bm{v}_k(j)||^2_{R} + \gamma N_k.
\end{aligned}
\end{gather}
Here, positive definite $Q$ and $R$ are state and input cost matrices, and $\gamma\geq0$ is a weighting factor that enforces a reduction of the horizon length.
Cost \eqref{eq:cost_fcn} is to be minimized for the horizon length $N_k$, nominal state sequence sequence $Z_k$ and control input sequence $V_k$ via the following robust VH-MPC problem:
\begin{subequations}
\label{eq:rvhmpc_problem}
\begin{align}
    J^\ast = & \min_{N_k} \min_{Z_k,V_k} J(Z_k, V_k, N_k), \label{eq:ocp} \\
    \text{s.t.} \,\, & \bm{x}(k) \in \{\bm{z}_k(0)\} \oplus \mathcal{E}, \label{eq:ocp1} \\
    & \bm{z}_k(j+1) = A \bm{z}_k(j) + B \bm{v}_k(j), \quad j\in\{0, ..., N_k-1\}, \\
    & \bm{z}_k(j) \in \mathcal{X} \ominus \mathcal{E} \text{ for } j\in\{0, ..., N_k-1\}, \\
    & \bm{v}_k(j) \in \mathcal{U} \ominus K \mathcal{E} \text{ for } j\in\{0, ..., N_k-1\}, \\
    & \bm{z}_k(N_k) = \bm{r}(k+N_k), \label{eq:ocp5}\\
    & N_k \in
    \left[
    N_{k-1}^\ast - 1 - \Delta N,\,
    N_{k-1}^\ast - 1 + \Delta N
    \right] \cap \mathbb{N}_+, \label{eq:n_in_new}
\end{align}
\end{subequations}
where $\Delta N$ defines the search range of horizon lengths around the optimal horizon length $N_{k-1}^\ast$ from the previous iteration.
Based on the optimal horizon length $N_k^\ast$, the applied nominal control input is then picked as the first control input in 
\begin{equation}
    V_k^\ast = \mathop{\arg \min} \limits_{V_k^{N_k^\ast}} J(V_k^{N_k^\ast}, N_k^\ast).
\end{equation}

Since solving \eqref{eq:rvhmpc_problem} is computationally demanding for a large set of horizon lengths, we modify the formulation with respect to \cite{richards_robust_2006} by only considering a range of horizons around the previous optimal horizon length in \eqref{eq:n_in_new}.
We now prove that recursive feasibility is still maintained for this formulation.

\begin{propo} \label{propo:recursive_feasibility}
    If problem \eqref{eq:rvhmpc_problem} is feasible at time \(k\) with optimal horizon \(N_k^\ast\geq 1\), then it is feasible at time \(k+1\) for horizon \(N_{k+1}=N_k^\ast-1\).
\end{propo}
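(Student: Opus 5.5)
The plan is the standard shifted-candidate argument of tube MPC \cite{mayne_robust_2005,richards_robust_2006}, adapted to the restricted horizon set in \eqref{eq:n_in_new}. Let $(Z_k^\ast,V_k^\ast,N_k^\ast)$ be an optimal solution at time $k$. At time $k+1$ we propose the horizon $N_{k+1}=N_k^\ast-1$ and the shifted sequences
\begin{equation*}
\bm{z}_{k+1}(j)=\bm{z}_k^\ast(j+1),\ j=0,\dots,N_k^\ast-1, \qquad \bm{v}_{k+1}(j)=\bm{v}_k^\ast(j+1),\ j=0,\dots,N_k^\ast-2.
\end{equation*}
We then check each constraint of \eqref{eq:rvhmpc_problem} for this candidate.

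\textbf{Horizon constraint.} We first verify \eqref{eq:n_in_new}. At time $k+1$ the admissible interval is $[N_k^\ast-1-\Delta N,\,N_k^\ast-1+\Delta N]\cap\mathbb{N}_+$, which contains $N_k^\ast-1$ for any $\Delta N\ge 0$. This is the only place the restricted search enters, and it is where the modification with respect to \cite{richards_robust_2006} must be justified. There is one subtlety. If $N_k^\ast=1$, then $N_{k+1}=0$, and whether this belongs to $\mathbb{N}_+$ depends on convention. In that case the candidate has no inputs, and $\bm{z}_{k+1}(0)=\bm{r}(k+1)$ is the target itself. I would treat it as the terminal case, where the target has been reached, or else read $\mathbb{N}_+$ as including $0$. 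I expect this boundary case, and not the set algebra, to need the most care in the write-up.

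\textbf{Dynamics, constraints and terminal equality.} The dynamics constraint holds trivially, since the candidate is a shifted nominal trajectory. The state constraint holds for $j=0,\dots,N_k^\ast-2$, because these are the entries $\bm{z}_k^\ast(1),\dots,\bm{z}_k^\ast(N_k^\ast-1)$, all of which lie in $\mathcal{X}\ominus\mathcal{E}$. The same argument gives $\bm{v}_{k+1}(j)\in\mathcal{U}\ominus K\mathcal{E}$. The terminal equality \eqref{eq:ocp5} holds because $\bm{z}_{k+1}(N_{k+1})=\bm{z}_k^\ast(N_k^\ast)=\bm{r}(k+N_k^\ast)=\bm{r}((k+1)+N_{k+1})$. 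Here it matters that the reference is indexed by absolute time, so the shift lines up exactly.

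\textbf{Initial tube condition \eqref{eq:ocp1}.} This is the key robustness step. The applied input is $\bm{u}(k)=\bm{v}_k^\ast(0)+K(\bm{x}(k)-\bm{z}_k^\ast(0))$. Writing $\bm{e}(k)=\bm{x}(k)-\bm{z}_k^\ast(0)$, we have $\bm{e}(k)\in\mathcal{E}$ by \eqref{eq:ocp1} at time $k$. The error dynamics then give
\begin{equation*}
\bm{x}(k+1)-\bm{z}_k^\ast(1)=(A+BK)\bm{e}(k)+\bm{w}(k)\in(A+BK)\mathcal{E}\oplus\mathcal{W}\subseteq\mathcal{E},
\end{equation*}
by robust positive invariance of $\mathcal{E}$. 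Hence $\bm{x}(k+1)\in\{\bm{z}_{k+1}(0)\}\oplus\mathcal{E}$. The actual input is also admissible, since $\bm{u}(k)\in(\mathcal{U}\ominus K\mathcal{E})\oplus K\mathcal{E}\subseteq\mathcal{U}$, although this is not needed for feasibility itself.

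All constraints are therefore satisfied, so problem \eqref{eq:rvhmpc_problem} is feasible at time $k+1$ with $N_{k+1}=N_k^\ast-1$. Because the RPI set $\mathcal{E}$ is fixed, the tightened sets do not depend on the horizon. As a result, no horizon-dependent tightening needs to be re-verified after the shift.
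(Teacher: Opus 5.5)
Your proof is correct and follows the paper's argument. The ingredients are the same: the shifted candidate $\hat{\bm{z}}_{k+1}(j)=\bm{z}_k^\ast(j+1)$, $\hat{\bm{v}}_{k+1}(j)=\bm{v}_k^\ast(j+1)$; RPI of $\mathcal{E}$ to place $\bm{x}(k+1)$ in $\{\bm{z}_k^\ast(1)\}\oplus\mathcal{E}$; absolute-time indexing of $\bm{r}$ for the terminal equality; and $N_k^\ast-1$ lying in the shifted interval \eqref{eq:n_in_new}. Your one-step invariance argument is slightly more economical than the paper's statement that $\bm{e}(k+j)\in\mathcal{E}$ for all $j\geq 0$. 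The $N_k^\ast=1$ case you flag, where $N_{k+1}=0$ may fall outside $\mathbb{N}_+$, is a genuine point that the paper's proof passes over silently.
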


\begin{proof}
Let \(\{\bm{z}_k^\ast(j),\bm{v}_k^\ast(j)\}\) denote a feasible solution at time \(k\).
Since
\begin{equation*}
\bm{x}(k) \in \{\bm{z}_k^\ast(0)\} \oplus \mathcal E,
\end{equation*}
the tube error satisfies \(\bm{e}(k)\in\mathcal E\).
By the RPI property,
\begin{equation*}
\bm{e}(k+j)\in\mathcal E, \qquad \forall j\geq 0,
\end{equation*}
and therefore
\begin{equation*}
\bm{x}(k+j) \in \{\bm{z}_k^\ast(j)\} \oplus \mathcal E, \qquad \forall j\geq 0.
\end{equation*}
In particular,
\begin{equation*}
\bm{x}(k+1) \in \{\bm{z}_k^\ast(1)\} \oplus \mathcal E.
\end{equation*}
Hence, the shifted sequence
\begin{equation*}
\hat{\bm{z}}_{k+1}(j)=\bm{z}_k^\ast(j+1),\qquad
\hat{\bm{v}}_{k+1}(j)=\bm{v}_k^\ast(j+1)
\end{equation*}
satisfies the initial, state, and input constraints at time \(k+1\).
Moreover,
\begin{align*}
\hat{\bm{z}}_{k+1}(N_k^\ast-1)
=\bm{z}_k^\ast(N_k^\ast)
&=r(k+N_k^\ast) \\
&=r\!\left((k+1)+(N_k^\ast-1)\right),
\end{align*}
and thus satisfies the terminal equality.
Consequently, the OCP is feasible at time \(k+1\) with horizon \(N_k^\ast-1\).
Since \(N_k^\ast-1\) belongs to the feasible horizon set in \eqref{eq:n_in_new} for any \(\Delta N\geq0\), recursive feasibility follows.
\end{proof}

Fig. \ref{fig:rvhmpc_cost_map} illustrates the influence of $\gamma$ on the task execution, where a higher value of $\gamma$ enforces a faster time of arrival.
A low value of $\gamma$ will result in a smoother trajectory (by tuning $Q$) and/or a smoother input profile (by tuning $R$).
\begin{figure}[t]
    \centering
    \vspace{5pt}
    \includegraphics[width=0.95\linewidth]{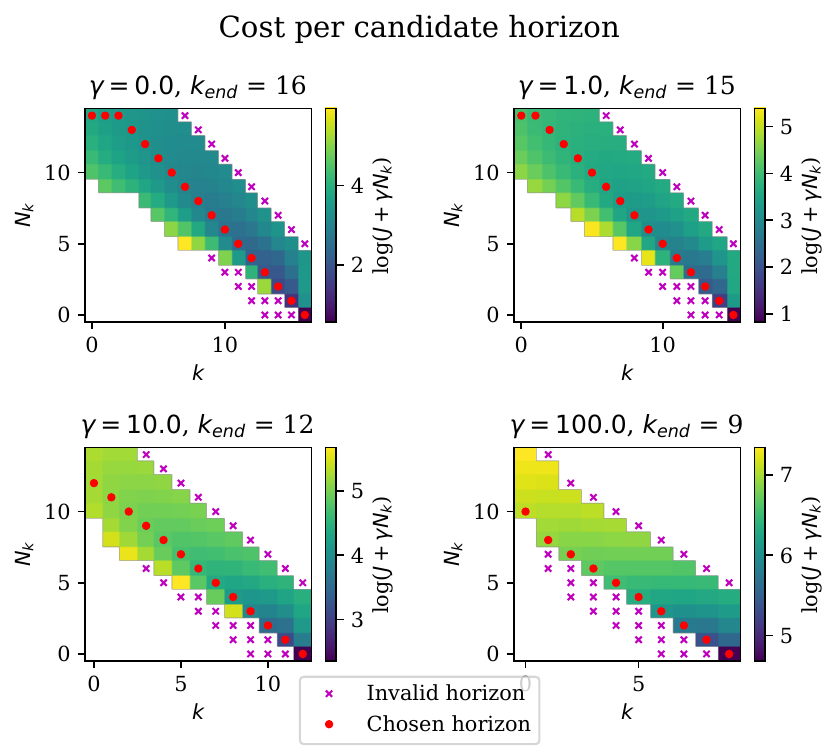}
    \vspace{-5pt}
    \caption{Costs for the evaluated horizons for each timestep for a simulation of a simple 3D system.
    The red dots indicate the chosen $N_k$, the magenta crosses indicate infeasible $N_k$.
    A constant reference state $r_k$ is used for these simulations.
    See our repository for details.}
    \label{fig:rvhmpc_cost_map}
\end{figure}

Since this controller is designed to run at a high frequency, depending on the chosen value of $\Delta N$, the horizon can quickly adjust to an optimal value over time.
This is also visible in Fig. \ref{fig:rvhmpc_cost_map}, where for high values of $\gamma$, the lowest horizon length in the given range in \eqref{eq:n_in_new} is only selected once.
Note that for $\Delta N=0$, the approach is identical to a standard shrinking horizon approach.

\section{ROBUST VH-MPC-BASED UAV CONTROL ARCHITECTURE} \label{sec:uav_adaptation}
To apply robust VH-MPC at a high control rate for UAV landing, we combine the formulation of Section \ref{sec:rvhmpc} with a decoupled linearized UAV model and a fixed RPI tube.

\begin{figure}[t]
    \centering
    \includegraphics[width=0.99\linewidth]{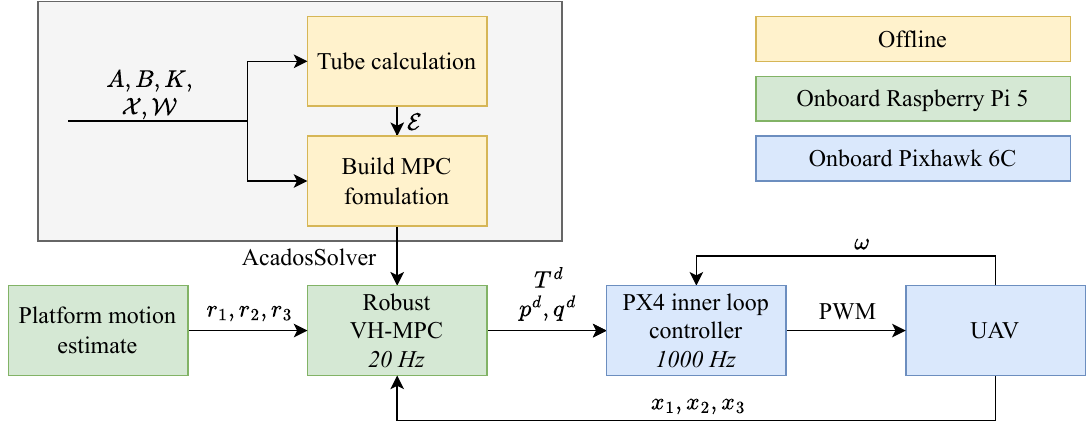}
    \vspace{-10pt}
    \caption{Robust VH-MPC control loop for the multirotor UAV}
    \vspace{-5pt}
    \label{fig:control_loop}
\end{figure}
\begin{figure}[t]
    \centering
    \includegraphics[width=0.8\linewidth]{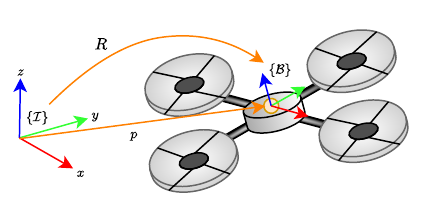}
    \vspace{-10pt}
    \caption{Used UAV coordinate frame}
    \vspace{-10pt}
    \label{fig:coordinates}
\end{figure}
Fig. \ref{fig:control_loop} shows an overview of the applied control architecture.
For the calculation of the thrust and body rate inputs of the UAV, we apply the robust VH-MPC method from Section \ref{sec:rvhmpc} to a decoupled, linearized model of a multirotor UAV and introduce an inner and outer optimization step to cope with the decoupling of the system.
In the architecture in Fig. \ref{fig:control_loop}, the robust VH-MPC control loop is designed to run at 20 Hz; this computational performance is evaluated experimentally in Section \ref{sec:calc_times}.
The thrust and body rates are maintained in a high-rate inner loop, resulting in the direct PWM motor signals.
Since this inner loop runs at a higher rate, we treat the rate tracking as instantaneous.
For the robust VH-MPC implementation, an RPI set is required.
Section \ref{sec:tube} describes how this RPI set is derived.
We now elaborate on the control design choices that were made to achieve a practical real-time robust VH-MPC control loop for a UAV, starting with the modeling of the system.

\subsection{Modeling and OCP for the multirotor UAV}
The states of the multirotor UAV that are considered relevant for landing on a moving platform are the position $\bm{p}\in\mathbb{R}^3$, velocity $\bm{v}\in\mathbb{R}^3$, and attitude $\bm{\eta}\in\mathbb{R}^3$, for which the latter is decomposed into the roll $\varphi$, pitch $\theta$, and yaw $\psi$ of the UAV.
These angles form the rotation matrix $R \in SO(3)$ that maps any vector in the UAV body-fixed frame $\mathcal{B}$ to the inertial frame $\mathcal{I}$ using the Euler $XYZ$ convention (Fig. \ref{fig:coordinates}).
The considered control inputs for the system are given by the total thrust $T$, and the body rates $p$, $q$, and $r$.
Since the translational dynamics are independent of the yaw of a multirotor UAV, the yaw is omitted from the UAV state and is controlled separately with a PD controller.
This leaves eight states to be controlled.
Although an MPC formulation with a linear model can be solved very efficiently with a Quadratic Programming (QP) solver, this high dimensionality still results in computational complexity.
Therefore, we apply the differential flatness result in \cite{mellinger_minimum_2011} and decouple the system into subsystems with at most three states.

Given the general multirotor UAV system dynamics
\begin{subequations}
\begin{align}
    \dot{\bm{p}} &= \bm{v}, \\
    \dot{\bm{v}} &= -\bm{g} + \frac{1}{m}R(\bm{\eta}) \begin{bmatrix}
        0 & 0 & T
    \end{bmatrix}^\top + F_\text{ext}, \\
    \dot{\bm{\eta}} &= \begin{bmatrix}
        p & q & r
    \end{bmatrix}^\top,
\end{align}
\end{subequations}
where $\bm{g}$ is the gravitational acceleration and $F_\text{ext}$ represents the external disturbance forces acting on the system; we then derive the decoupled linearized dynamics for system states $\bm{x}_1 = \begin{bmatrix}
    p_x & v_x & \theta
\end{bmatrix}^\top$, $\bm{x}_2 = \begin{bmatrix}
    p_y & v_y & \varphi
\end{bmatrix}^\top$, and $\bm{x}_3 = \begin{bmatrix}
    p_z & v_z
\end{bmatrix}^\top$ as
\begin{subequations}\label{eq:linearized_system}
\begin{align}
    \label{eq:sys1}& \dot{\bm{x}}_1 = \begin{bmatrix}
        0 & 1 & 0 \\
        0 & 0 & g \\
        0 & 0 & 0
    \end{bmatrix} \bm{x}_1 + \begin{bmatrix}
        0 \\ 0 \\ 1
    \end{bmatrix} q + \bm{w}_1, \\
    & \dot{\bm{x}}_2 = \begin{bmatrix}
        0 & 1 & 0 \\
        0 & 0 & -g \\
        0 & 0 & 0
    \end{bmatrix} \bm{x}_2 + \begin{bmatrix}
        0 \\ 0 \\ 1
    \end{bmatrix} p + \bm{w}_2, \\
    \label{eq:sys3}& \dot{\bm{x}}_3 = \begin{bmatrix}
        0 & 1 \\
        0 & 0
    \end{bmatrix} \bm{x}_3 + \begin{bmatrix}
        0 \\ 1
    \end{bmatrix} T + \bm{w}_3.
\end{align}
\end{subequations}
This decoupling is central to the real-time implementation: it reduces the RPI-set calculation and MPC subproblems to systems of at most three states, while allowing the horizon selection to be performed over the complete UAV objective, as we show next.

We discretize this linear system and use it within the MPC framework.
To efficiently solve the robust VH-MPC problem in \eqref{eq:rvhmpc_problem}, we first solve the MPC formulation for each subsystem in \eqref{eq:linearized_system} individually for the selected horizon range
\begin{gather}
\begin{aligned}
    J_{i}(\bm{x}_i(k), N_k) = \min_{V_k} &\sum_{j=0}^{N_k}||\bm{z}_{k,i}(j) - \bm{r}_i(k+j)||^2_{Q_i}\\+&\sum_{j=0}^{N_k-1} ||\bm{v}_{k,i}(j)||^2_{R_i}, \quad \text{ s.t. \eqref{eq:ocp1}- \eqref{eq:n_in_new}},
\end{aligned}
\end{gather}
after which we choose the best horizon:
\begin{equation} \label{eq:outer_optimization}
    J^\ast = \min_{N_k} \gamma N_k + \sum_{i\in\{1,2,3\}}J_i(\bm{x}_i(k), N_k),
\end{equation}
where $Q_i$ and $R_i$ are such that $Q = \text{diag}(Q_1, Q_2, Q_3)$ and $R = \text{diag}(R_1, R_2, R_3)$ for $i\in\{1,2,3\}$.


\begin{remark}
    The linearized model is valid around the hovering state of a multirotor UAV.
    Especially for touchdowns with higher angles, this approximation is less accurate.
    However, the discrepancies caused by the nonlinearity of the system can be seen as part of the disturbance $\bm{w}_i$ for $i\in \{1, 2, 3 \}$.
\end{remark}

From the resulting optimal horizon in \eqref{eq:outer_optimization}, we derive the optimal control inputs for each subsystem.

\subsection{Tube calculation} \label{sec:tube}
The robust VH-MPC formulation evaluates multiple candidate horizons at each update.
Using state- or horizon-dependent tube sets would therefore increase the computational burden.
We therefore use a single fixed RPI set $\mathcal{E}$ as the tube cross-section, which can be calculated offline.
The decoupling of the linearized system ensures that such an RPI set only needs to be calculated for up to three dimensions.
However, computing this RPI set is still challenging.

A polytopic representation of tube cross-section $\mathcal{E}$ is adopted in this work, i.e., 
\begin{equation*}
    \mathcal{E} = {\bm{e} \in \mathbb{R}^n : H \bm{e} \leq h},
\end{equation*}
where matrix $H$ and vector $h$ define the half-space representation.
This yields linear constraints and preserves the tractability of problem \eqref{eq:rvhmpc_problem} at high control rates.
However, the number of facets required to represent an exact finite RPI set generally increases rapidly with the state dimension, resulting in a significant computational burden.
To address this and retain the robustness properties, an approximation of the RPI set is used based on \cite{trodden_one-step_2016}.
Under this framework, $\mathcal{E}$ is computed offline by optimizing vector $h$ for a prescribed matrix $H$.
The associated optimization problem is a linear program (LP) and can therefore be solved efficiently.

To obtain a suitable matrix $H$, we first construct an invariant ellipsoid $S(P)=\{\bm{e}\in\mathbb{R}^n:||\bm{e}||^2_P \leq 1\}$ from the closed-loop system matrix $A_{cl} =  A+BK$ and disturbance set $\mathcal{W}$, using \cite{kothare_robust_1996}, such that for system \eqref{eq:real_system}-\eqref{eq:nominal_system}, if $\bm{e} \in S(P)$, then $\bm{e} \in S(P)$ for $i \in \mathbb{N}_+$.
The ellipsoidal invariant set is then approximated by a polytope obtained from supporting hyperplanes.
The hyperplane normals are generated by sampling rotations in each coordinate plane, while the corresponding offsets are computed exactly from the ellipsoid support function $h(\bm{n})=||\bm{n}||_P$.
While this ellipsoidal approximation is not guaranteed to be invariant, it is used only to define the facet normals $H$.
The optimization of offset vector $h$ according to \cite{trodden_one-step_2016} guarantees that the resulting set is RPI.

\begin{remark}
    Numerous methods utilize a dynamic shape of the tube $\mathcal{E}$ for less conservatism and better control performance \cite{gonzalez_online_2011, saccani_homothetic_2023}.
    Although especially at the final stages of landing, high control performance is essential, a dynamic tube comes at a high computational cost.
    Therefore, we have chosen a fixed RPI set, trading potential reduction in conservatism for a reduction in computational complexity, which is important for the considered high-rate implementation.
\end{remark}

A full implementation of the described method can be found at \href{https://gitlab.tue.nl/tue-waari/rvhmpc}{\texttt{gitlab.tue.nl/tue-waari/rvhmpc}}.


\section{EXPERIMENTAL VALIDATION} \label{sec:experiments}
This section validates four aspects required for the proposed approach: the computational feasibility of the proposed controller, its robustness, and the landing performance of a real multirotor UAV on a moving platform.
The range of initial conditions for which the landing problem is feasible is analyzed as well.
Throughout this section, the parameters in Table \ref{tab:used-parameters} are used.
Furthermore, $K$ is chosen to be the LQR feedback gain.
\begin{table}[b]
\centering
\caption{Used parameters throughout simulations and experiments}
\label{tab:used-parameters}
\begin{tabular}{lclc}
\toprule
{Parameter} & {Value} & {Parameter} & {Value} \\
\midrule
Timestep                        & $0.05$ s                          & $Q_1$, $Q_2$ & $\texttt{diag}(2.0,\ 2.0,\ 0.8)$  \\
Maximum horizon                 & $30$                              & $Q_3$ & $\texttt{diag}(30.0,\ 10.0)$  \\
$\Delta N$                      & $1$                               & $R_1$, $R_2$ & $6.0$       \\
$\gamma$                        & $0.50$                            & $R_3$ & $2.0$                              \\
$v_{z,\text{land}}$ & $-0.35$                            &  &                               \\
\bottomrule
\end{tabular}
\end{table}

\subsection{Computational performance and horizon selection} \label{sec:calc_times}
The proposed efficient horizon search from Section \ref{sec:rvhmpc} aims at reducing the computational complexity of robust VH-MPC, while still maintaining its effectiveness.
To validate this, the discretized linear system in \eqref{eq:linearized_system} was simulated in Python with a fixed time step.
In these simulations, the initial UAV positions are $1.0\pm0.5$ [$m$] above the target state, with horizontal deviations of $1.0 \pm 0.5$ [$m$].
Noise is bounded such that $||\bm{w}_{i,\text{max}}||<=1.0$.
In this analytical simulation, the platform target state has a roll angle of $15$ degrees and a pitch of $30$ degrees.
All simulations are performed on a Raspberry Pi 5 to show that the method is practical for a relatively lightweight computer.
For each horizon set dependent on $\Delta N$, 1000 simulation runs are performed.

From the results in Table \ref{tab:performance-vs-all}, it is clear that restricting the horizon search substantially reduces the computation times with only a limited increase in cumulative objective costs.
\begin{table}[b]
\centering
\caption{Effect of $\Delta N$ on landing performance for Python simulations, assuming instant robust VH-MPC calculation times}
\label{tab:performance-vs-all}
\begin{tabular}{lcccc}
\toprule
$\Delta N$ & Solve Time & Landing Time & Landing & Total Cost \\
& (ms) & (s) & Accuracy & $\times 10^3$\\
& & & (m) & \\
\midrule
0 & $5.03 \pm 0.10$ & $1.50 \pm 0.00$ & $0.15 \pm 0.03$ & $14.97 \pm 9.9$ \\ 1 & $15.10 \pm 1.09$ & $1.72 \pm 0.20$ & $0.14 \pm 0.04$ & $14.38 \pm 8.7$ \\ 2 & $23.18 \pm 1.30$ & $1.72 \pm 0.21$ & $0.14 \pm 0.04$ & $14.37 \pm 8.7$ \\ 4 & $38.97 \pm 1.78$ & $1.71 \pm 0.21$ & $0.14 \pm 0.04$ & $14.36 \pm 8.8$ \\ 10 & $80.86 \pm 2.43$ & $1.71 \pm 0.22$ & $0.14 \pm 0.04$ & $14.35 \pm 8.8$ \\ All & $146.9 \pm 2.4$ & $1.71 \pm 0.22$ & $0.14 \pm 0.04$ & $14.35 \pm 8.8$ \\
\bottomrule
\end{tabular}
\end{table}
For example, at $\Delta N = 4$, the controller evaluates nine candidate horizons and remains below a control period of 50 ms for all iterations, with a total cost almost equal to considering all horizons, indicating that the proposed method is computationally suitable for 20 Hz real-time operation.

\subsection{Initial feasibility}
Proposition \ref{propo:recursive_feasibility} guarantees recursive feasibility only when the initial optimization problem is feasible. 
Therefore, we characterize the set of relative UAV positions from which the proposed controller can initiate landing for the parameters in Table \ref{tab:used-parameters}.
When hovering and for a flat platform, the initial position $p_0$ should lie in $\{p: p_x \in [-2.93, 2.93],\, p_y \in [-2.93, 2.93],\, p_z \in [-0.10, 3.47]\}$.
For a moving platform with a roll of 30 degrees, pitch of 15 degrees, and velocity of $v_t=\begin{bmatrix}
    1.0 & 2.0 & 0.5
\end{bmatrix}^\top$, and a moving UAV with a roll of 15 degrees, pitch of 30 degrees, and a horizontal velocity of $v = \begin{bmatrix}
    2.0 & -1.0 & 0.0
\end{bmatrix}^\top$, $p_0$ should lie in $\{p: p_x \in [-3.74, 1.31],\, p_y \in [-2.93, 1.92],\, p_z \in [-0.10, 3.35]\}$.
These results show that, for varying initial conditions, landing can be initiated from a safe distance from the platform.

\subsection{Robustness} \label{sec:robustness}
A Gazebo setup simulates a virtual Holybro X500 quadrotor together with a 6-DOF moving platform, using the architecture in Fig. \ref{fig:control_loop}.
These simulations are performed in real time and include computation delays.
For all simulations, the platform moves periodically with the amplitudes and frequencies given in Table \ref{tab:platform_motion}, together with the range of initial hovering positions from which the landing maneuvers were started.
Throughout the simulation, the ground truth states of the UAV and platform are available for the robust VH-MPC implementation, and the required platform motion prediction is estimated by an Extended Kalman Filter (EKF).
\begin{table}[b] \centering \caption{Initial UAV positions and platform motion parameters} \label{tab:platform_motion} \begin{tabular}{lccc} \toprule & $x$ & $y$ & $z$ \\ \midrule Initial UAV position [m] & $[2.5,3.0]$ & $[-1.0,0.0]$ & $2.5$ \\ Platform amplitude [m] & $0.2$ & $0.3$ & $0.3$ \\ Platform frequency [rad/s] & $0.3$ & $0.5$ & $0.8$ \\ \midrule & $\phi$ & $\theta$ & $\psi$ \\ \midrule Platform amplitude [deg] & $10$ & $20$ & $5$ \\ Platform frequency [rad/s] & $0.4$ & $0.2$ & $0.1$ \\ \bottomrule \end{tabular} \end{table}

The results in Table \ref{tab:gazebo} demonstrate the importance of the efficient horizon search.
For each $\Delta N$, around 25 simulations were performed.
Although evaluating all horizons provides lower cumulative costs, the success rate is substantially lower when real-time computation delays are included.
From this, $\Delta N=2$ provides the best real-time performance in terms of the cost function and success rate, while $\Delta N=0$ provides the highest mission success rate.
\begin{table}[b]
\centering
\caption{Effect of $\Delta N$ on landing performance during real-time Gazebo simulations}
\label{tab:gazebo}
\begin{tabular}{lcccc}
\toprule
$\Delta N$ & Success & Landing Time & Landing & Total Cost \\
           & Rate    & (s)          & Accuracy              &            \\
& & & (m) & \\
\midrule
0 & $92.9\%$ & $2.00 \pm 0.66$ & $0.17 \pm 0.01$ & $96.3 \pm 27.9$ \\ 1 & $70.8\%$ & $2.20 \pm 1.24$ & $0.16 \pm 0.00$ & $110.4 \pm 74.2$ \\ 2 & $72.0\%$ & $1.76 \pm 0.84$ & $0.17 \pm 0.00$ & $88.7 \pm 17.5$ \\ 4 & $67.9\%$ & $2.46 \pm 1.66$ & $0.17 \pm 0.00$ & $99.0 \pm 31.4$ \\ 10 & $50.0\%$ & $3.04 \pm 2.18$ & $0.17 \pm 0.01$ & $116.2 \pm 71.3$ \\ All & $0.0\%$ & --- & --- & --- \\
\bottomrule
\end{tabular}
\end{table}

\subsection{UAV landing experiment}
To validate the effectiveness of the proposed architecture on a real UAV, the experimental setup in Fig. \ref{fig:experimental_setup} was used.
This setup consists of a Holybro X500 quadcopter with a Pixhawk 6C flight controller and an onboard Raspberry Pi 5 for the robust VH-MPC control loop.
To mimic vessel motions, an Acrome Stewart Platform \cite{noauthor_stewart_nodate} has been used to land on.
Since general vessel motions are close to sinusoidal motions \cite{fossen_handbook_2021}, the Stewart platform was controlled to move in sinusoidal motions, specifically, a harmonic pitch motion with an amplitude of 15 degrees at a period of 7 seconds.
An OptiTrack Motion Capture system provides UAV and platform position and attitude measurements.
In the experiments, the reference estimation in the robust VH-MPC problem in \eqref{eq:outer_optimization} is represented by a platform state estimation in 6 DOFs over time for the future horizons, obtained using a simple Extended Kalman Filter (EKF).
The standard PX4 EKF was used to obtain full state measurements of the UAV, incorporating the UAV motion capture measurements.
In an offshore application, these state estimates can be obtained using computer vision measurements, for example.
The predicted state of the moving platform was not directly used withing the robust VH-MPC formulation; instead, a downward landing velocity of $v_{z,\text{land}}$, normal to the platform, was considered to ensure a timely and decisive touchdown.
During the experiments, the platform motion estimates and robust VH-MPC control inputs were calculated in real-time on an onboard Raspberry Pi 5, following the architecture in Fig. \ref{fig:control_loop}.
The robust VH-MPC control scheme was implemented in Python with acados \cite{robin_verschueren_acados_2021}, using an SQP-based formulation with Gauss-Newton Hessian approximation and the partial-condensing HPIPM QP solver, running at 20 Hz.
The calculation of the RPI set and building the MPC formulation in acados can be done offline.
For the state and input bounds $\mathcal{X}$ and $\mathcal{U}$, box constraints were used.
The disturbances during the experiments are assumed to be bounded using a box constraint, where $|\bm{w}_1| \leq \begin{bmatrix}
    0.0 & 0.0 & 0.10
\end{bmatrix}^\top$, $|\bm{w}_2| \leq \begin{bmatrix}
    0.0 & 0.0 & 0.10
\end{bmatrix}^\top$, and $|\bm{w}_3| \leq \begin{bmatrix}
    0.0 & 0.0 & 0.15
\end{bmatrix}^\top$.
The robust VH-MPC formulation assumes bounded disturbances.
However, in implementation, if no feasible solution is found within the nominal search range, the horizon range is extended as a practical fallback.

During the experiments with the setup in Fig. \ref{fig:experimental_setup}, the UAV was manually flown to a feasible state above the moving platform, $0.75-1.4$ [$m$] above the platform, after which the controller from Section \ref{sec:uav_adaptation} was enabled. 

Fig. \ref{fig:3D_trajectory} shows the 3D trajectory of the UAV during one of these experiments, including a projection of the tube set $\mathcal{E}$, and Fig. \ref{fig:state_evolution} shows the evolution of the states during the landing procedure of a typical experiment.
Here, the real state remains within the RPI set around the nominal trajectory throughout the landing procedure, while the nominal trajectory matches the terminal target state, i.e., the moving platform.
The decoupling of the UAV dynamics explains the rectangular shape of the RPI tube.
The deviation of the real state with respect to the nominal state, permitted by the tube, is relatively high for the position, whereas the velocity and tilt deviations are smaller, which is particularly important for a successful landing.
The $y$-position and velocity, as well as the roll of the UAV, are omitted in Fig. \ref{fig:state_evolution} since these are similar to the $x$ state and exhibit less pronounced control behavior, since the roll of the platform was close to zero.
\begin{figure}[t]
    \centering
    \vspace{5pt}
    \includegraphics[width=0.85\linewidth]{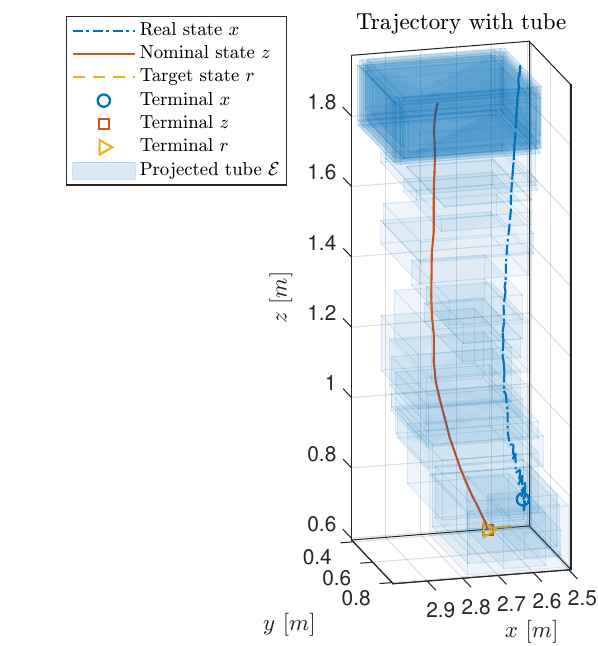}
    \vspace{-5pt}
    \caption{3D UAV landing trajectory and projected RPI tube $\mathcal{E}$. The nominal trajectory reaches the moving target state $r$ at the terminal time, while the measured trajectory remains within the tube.}
    \label{fig:3D_trajectory}
\end{figure}
\begin{figure}[t]
    \centering
    \includegraphics[width=0.95\linewidth]{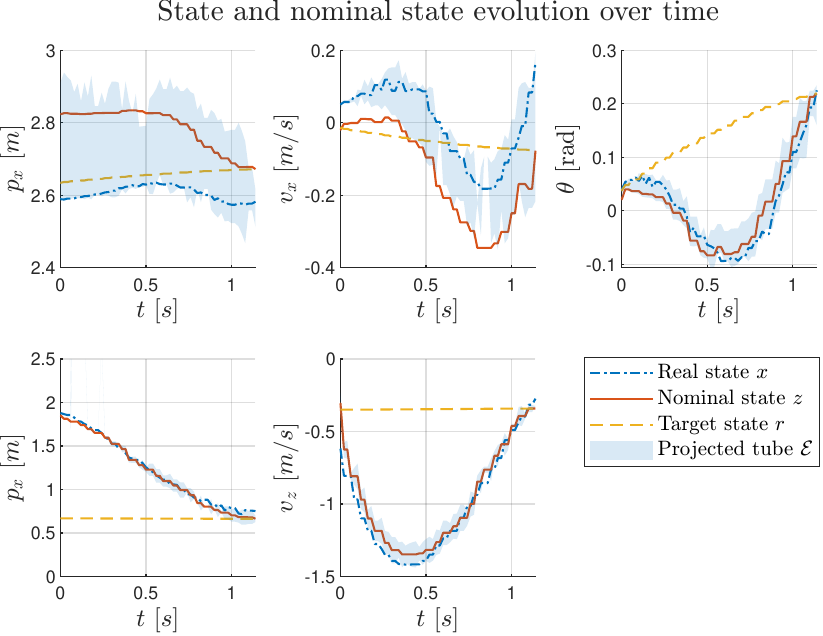}
    \vspace{-5pt}
    \caption{State evolution over time during the landing procedure. The measured state $x$ mostly remains within the projected tube.}
    \vspace{-10pt}
    \label{fig:state_evolution}
\end{figure}

Fig. \ref{fig:cost_per_horizon} highlights the selection of the horizon length from the given range using $\Delta N$.
The selected horizons demonstrate how the horizon can adapt while searching only a small neighborhood around the previously selected horizon.
This, together with the results in Table \ref{tab:gazebo}, indicates that a small horizon search is sufficient for effective real-time horizon adaptation.
\begin{figure}[t]
    \centering
    \vspace{5pt}
    \includegraphics[width=0.99\linewidth]{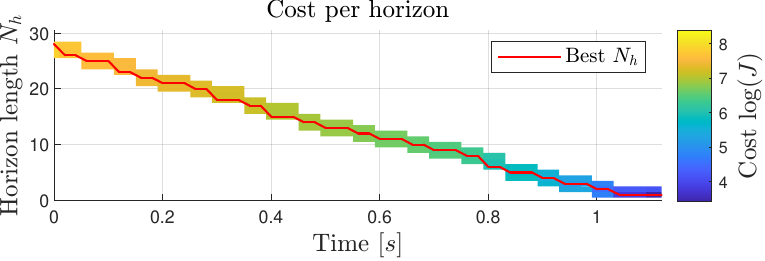}
    \vspace{-10pt}
    \caption{Cost per horizon and selected horizon over time. The cost-to-go is reduced at every time step, while the horizon length has the flexibility to change to achieve a lower cost-to-go.}
    \vspace{-10pt}
    \label{fig:cost_per_horizon}
\end{figure}

\subsection{Discussion}
The use of a fixed RPI tube introduces a trade-off between robustness and landing accuracy.
Larger disturbance bounds result in a larger tube and therefore more conservative tightened constraints, which can limit how closely the UAV can match the platform state at touchdown.

During the experiments, disturbances are never fully bounded within $\mathcal{W}$ and can still lead to infeasible candidate horizons.
The implemented extension of the horizon search provides a practical fallback, but this behavior is not covered by the recursive-feasibility guarantee.

The physical experiments demonstrate the feasibility of the complete onboard architecture and robust control behavior, but the experimental conditions remain simpler than those of a real vessel.
The Stewart platform reproduces representative periodic platform motion, but is limited to a harmonic pitch motion.
Consequently, the proposed architecture has not yet been experimentally evaluated under simultaneous 6-DOF vessel motion.
Such experiments are an important next step toward validating the approach under more representative landing conditions.

\section{CONCLUSION} \label{sec:conclusion}
The robust VH-MPC method allows a system to robustly reach a non-equilibrium point, while the extension to a multirotor UAV system offers a method to robustly land on a 6-DOF moving platform in real-time.
Experiments showcase how the method can be run on a lightweight Raspberry Pi onboard computer to control the body rates and thrust reliably at 20 Hz.
Future work includes using vision-based measurements for platform detection to allow running the algorithm fully onboard the UAV, without the need for external measurements.
Including the measurement and vessel motion modelling errors could then also be incorporated in the robust VH-MPC framework.
When implemented successfully, the proposed method can be tested in a real scenario to land a multirotor UAV robustly on a vessel in the presence of wind disturbances.


\bibliographystyle{Styles/IEEEtran}
\bibliography{Styles/IEEEabrv,refs_no_url}

@article{gonzalez_online_2011,
	title = {Online robust tube-based {MPC} for time-varying systems: a practical approach},
	volume = {84},
	issn = {0020-7179, 1366-5820},
	shorttitle = {Online robust tube-based {MPC} for time-varying systems},
	doi = {10.1080/00207179.2011.594093},
	language = {en},
	number = {6},
	journal = {International Journal of Control},
	author = {Gonzalez, R. and Fiacchini, M. and Alamo, T. and Guzman, J.L. and Rodriguez, F.},
	month = jun,
	year = {2011},
	pages = {1157--1170},
}

@article{saccani_homothetic_2023,
	title = {Homothetic {Tube} {Model} {Predictive} {Control} {With} {Multi}-{Step} {Predictors}},
	volume = {7},
	copyright = {https://ieeexplore.ieee.org/Xplorehelp/downloads/license-information/IEEE.html},
	issn = {2475-1456},
	doi = {10.1109/LCSYS.2023.3336261},
	journal = {IEEE Control Systems Letters},
	author = {Saccani, Danilo and Ferrari-Trecate, Giancarlo and Zeilinger, Melanie N. and Köhler, Johannes},
	year = {2023},
	pages = {3561--3566},
}

@article{zhang_inspection_2024,
	title = {Inspection of {Floating} {Offshore} {Wind} {Turbines} {Using} {Multi}-{Rotor} {Unmanned} {Aerial} {Vehicles}: {Literature} {Review} and {Trends}},
	volume = {24},
	issn = {1424-8220},
	shorttitle = {Inspection of {Floating} {Offshore} {Wind} {Turbines} {Using} {Multi}-{Rotor} {Unmanned} {Aerial} {Vehicles}},
	doi = {10.3390/s24030911},
	language = {en},
	number = {3},
	journal = {Sensors},
	author = {Zhang, Kong and Pakrashi, Vikram and Murphy, Jimmy and Hao, Guangbo},
	month = jan,
	year = {2024},
	pages = {911},
}

@article{richards_robust_2006,
	title = {Robust variable horizon model predictive control for vehicle maneuvering},
	volume = {16},
	copyright = {Copyright © 2006 John Wiley \& Sons, Ltd.},
	issn = {1099-1239},
	doi = {10.1002/rnc.1059},
	language = {en},
	number = {7},
	journal = {International Journal of Robust and Nonlinear Control},
	author = {Richards, Arthur and How, Jonathan P.},
	year = {2006},
	pages = {333--351},
}

@article{robin_verschueren_acados_2021,
	title = {acados -- a modular open-source framework for fast embedded optimal control},
	journal = {Mathematical Programming Computation},
	author = {{Robin Verschueren} and {Gianluca Frison} and {Dimitris Kouzoupis} and {Jonathan Frey} and {Niels van Duijkeren} and {Andrea Zanelli} and {Branimir Novoselnik} and {Thivaharan Albin} and {Rien Quirynen} and {Moritz Diehl}},
	year = {2021},
}

@article{kan_analysis_2019,
	title = {Analysis of {Ground} {Effect} for {Small}-{Scale} {UAVs} in {Forward} {Flight}},
	volume = {4},
	copyright = {https://ieeexplore.ieee.org/Xplorehelp/downloads/license-information/IEEE.html},
	issn = {2377-3766, 2377-3774},
	doi = {10.1109/LRA.2019.2929993},
	number = {4},
	journal = {IEEE Robotics and Automation Letters},
	author = {Kan, Xinyue and Thomas, Justin and Teng, Hanzhe and Tanner, Herbert G. and Kumar, Vijay and Karydis, Konstantinos},
	month = oct,
	year = {2019},
	pages = {3860--3867},
}

@misc{noauthor_stewart_nodate,
	title = {Stewart {Platform}: 6-{DoF} {Hexapod} {Positioner} {System} {\textbar} {Acrome}},
	shorttitle = {Stewart {Platform}},
	language = {en},
}

@article{kothare_robust_1996,
	title = {Robust constrained model predictive control using linear matrix inequalities},
	volume = {32},
	copyright = {https://www.elsevier.com/tdm/userlicense/1.0/},
	issn = {00051098},
	doi = {10.1016/0005-1098(96)00063-5},
	language = {en},
	number = {10},
	journal = {Automatica},
	author = {Kothare, Mayuresh V. and Balakrishnan, Venkataramanan and Morari, Manfred},
	month = oct,
	year = {1996},
	pages = {1361--1379},
}

@book{fossen_handbook_2021,
	edition = {1},
	title = {Handbook of {Marine} {Craft} {Hydrodynamics} and {Motion} {Control}},
	copyright = {http://doi.wiley.com/10.1002/tdm\_license\_1.1},
	isbn = {978-1-119-57505-4 978-1-119-57501-6},
	doi = {10.1002/9781119575016},
	language = {en},
	publisher = {Wiley},
	author = {Fossen, Thor I},
	month = jun,
	year = {2021},
}

@inproceedings{mellinger_minimum_2011,
	address = {Shanghai, China},
	title = {Minimum snap trajectory generation and control for quadrotors},
	isbn = {978-1-61284-386-5},
	doi = {10.1109/ICRA.2011.5980409},
	booktitle = {2011 {IEEE} {International} {Conference} on {Robotics} and {Automation}},
	publisher = {IEEE},
	author = {Mellinger, Daniel and Kumar, Vijay},
	month = may,
	year = {2011},
	pages = {2520--2525},
}

@article{mayne_robust_2005,
	title = {Robust model predictive control of constrained linear systems with bounded disturbances},
	volume = {41},
	copyright = {https://www.elsevier.com/tdm/userlicense/1.0/},
	issn = {00051098},
	doi = {10.1016/j.automatica.2004.08.019},
	language = {en},
	number = {2},
	journal = {Automatica},
	author = {Mayne, D.Q. and Seron, M.M. and Raković, S.V.},
	month = feb,
	year = {2005},
	pages = {219--224},
}

@article{liu_research_2024,
	title = {Research on {Motion} {Control} and {Compensation} of {UAV} {Shipborne} {Autonomous} {Landing} {Platform}},
	volume = {15},
	issn = {2032-6653},
	doi = {10.3390/wevj15090388},
	language = {en},
	number = {9},
	journal = {World Electric Vehicle Journal},
	author = {Liu, Xin and Shao, Mingzhi and Zhang, Tengwen and Zhou, Hansheng and Song, Lei and Jia, Fengguang and Sun, Chengmeng and Yang, Zhuoyi},
	month = aug,
	year = {2024},
	pages = {388},
}

@article{gupta_landing_2023,
	title = {Landing a {UAV} in {Harsh} {Winds} and {Turbulent} {Open} {Waters}},
	volume = {8},
	copyright = {https://ieeexplore.ieee.org/Xplorehelp/downloads/license-information/IEEE.html},
	issn = {2377-3766, 2377-3774},
	doi = {10.1109/LRA.2022.3231831},
	number = {2},
	journal = {IEEE Robotics and Automation Letters},
	author = {Gupta, Parakh M. and Pairet, Eric and Nascimento, Tiago and Saska, Martin},
	month = feb,
	year = {2023},
	pages = {744--751},
}

@inproceedings{iida_adaptive_2025,
	address = {Atlanta, GA, USA},
	title = {Adaptive {Perching} and {Grasping} by {Aerial} {Robot} with {Light}-{Weight} and {High} {Grip}-{Force} {Tendon}-{Driven} {Three}-{Fingered} {Hand} {Using} {Single} {Actuator}},
	copyright = {https://doi.org/10.15223/policy-029},
	isbn = {979-8-3315-4139-2},
	doi = {10.1109/ICRA55743.2025.11128377},
	booktitle = {2025 {IEEE} {International} {Conference} on {Robotics} and {Automation} ({ICRA})},
	publisher = {IEEE},
	author = {Iida, Hisaaki and Sugihara, Junichiro and Sugihara, Kazuki and Kozuka, Haruki and Li, Jinjie and Nagato, Keisuke and Zhao, Moju},
	month = may,
	year = {2025},
	pages = {15218--15224},
}

@inproceedings{lee_autonomous_2024,
	address = {Yokohama, Japan},
	title = {Autonomous aerial perching and unperching using omnidirectional tiltrotor and switching controller},
	copyright = {https://doi.org/10.15223/policy-029},
	isbn = {979-8-3503-8457-4},
	doi = {10.1109/ICRA57147.2024.10610445},
	booktitle = {2024 {IEEE} {International} {Conference} on {Robotics} and {Automation} ({ICRA})},
	publisher = {IEEE},
	author = {Lee, Dongjae and Hwang, Sunwoo and Byun, Jeonghyun and Lee, Seung Jae and Jin Kim, H.},
	month = may,
	year = {2024},
	pages = {1590--1596},
}

@article{sanchez-cuevas_characterization_2017,
	title = {Characterization of the {Aerodynamic} {Ground} {Effect} and {Its} {Influence} in {Multirotor} {Control}},
	volume = {2017},
	copyright = {http://creativecommons.org/licenses/by/4.0/},
	issn = {1687-5966, 1687-5974},
	doi = {10.1155/2017/1823056},
	language = {en},
	journal = {International Journal of Aerospace Engineering},
	author = {Sanchez-Cuevas, Pedro and Heredia, Guillermo and Ollero, Anibal},
	year = {2017},
	pages = {1--17},
}

@article{meere_x-ray_2025,
	title = {X-ray {Image} {Generation} for {Robotic} {Radiography}: a {Case} {Study} on {Motion} {Blur} in {Drone}-{Based} {Wind} {Turbine} {Inspections}},
	volume = {44},
	issn = {0195-9298, 1573-4862},
	shorttitle = {X-ray {Image} {Generation} for {Robotic} {Radiography}},
	doi = {10.1007/s10921-025-01279-6},
	language = {en},
	number = {4},
	journal = {Journal of Nondestructive Evaluation},
	author = {Meere, Bas and Doodeman, Sander and Vidal, Franck P. and Chanfreut, Paula and Torta, Elena and Antunes, Duarte},
	month = dec,
	year = {2025},
	pages = {139},
}

@article{fun_sang_cepeda_exploring_2023,
	title = {Exploring {Autonomous} and {Remotely} {Operated} {Vehicles} in {Offshore} {Structure} {Inspections}},
	volume = {11},
	issn = {2077-1312},
	doi = {10.3390/jmse11112172},
	language = {en},
	number = {11},
	journal = {Journal of Marine Science and Engineering},
	author = {Fun Sang Cepeda, Maricruz and Freitas Machado, Marcos De Souza and Sousa Barbosa, Fabrício Hudson and Santana Souza Moreira, Douglas and Legaz Almansa, Maria José and Lourenço De Souza, Marcelo Igor and Caprace, Jean-David},
	month = nov,
	year = {2023},
	pages = {2172},
}

@article{trodden_one-step_2016,
	title = {A {One}-{Step} {Approach} to {Computing} a {Polytopic} {Robust} {Positively} {Invariant} {Set}},
	volume = {61},
	copyright = {https://ieeexplore.ieee.org/Xplorehelp/downloads/license-information/IEEE.html},
	issn = {0018-9286, 1558-2523},
	doi = {10.1109/TAC.2016.2541300},
	number = {12},
	journal = {IEEE Transactions on Automatic Control},
	author = {Trodden, Paul},
	month = dec,
	year = {2016},
	pages = {4100--4105},
}

@inproceedings{stephenson_distributed_2024,
	address = {Chania - Crete, Greece},
	title = {Distributed {Model} {Predictive} {Control} for {Cooperative} {Multirotor} {Landing} on {Uncrewed} {Surface} {Vessel} in {Waves}},
	copyright = {https://doi.org/10.15223/policy-029},
	isbn = {979-8-3503-5788-2},
	doi = {10.1109/ICUAS60882.2024.10557042},
	booktitle = {2024 {International} {Conference} on {Unmanned} {Aircraft} {Systems} ({ICUAS})},
	publisher = {IEEE},
	author = {Stephenson, Jess and Duncan, Nathan T. and Greeff, Melissa},
	month = jun,
	year = {2024},
	pages = {645--651},
}

@article{prochazka_model_2024,
	title = {Model predictive control-based trajectory generation for agile landing of unmanned aerial vehicle on a moving boat},
	volume = {313},
	issn = {00298018},
	doi = {10.1016/j.oceaneng.2024.119164},
	language = {en},
	journal = {Ocean Engineering},
	author = {Procházka, Ondřej and Novák, Filip and Báča, Tomáš and Gupta, Parakh M. and Pěnička, Robert and Saska, Martin},
	month = dec,
	year = {2024},
	pages = {119164},
}

@inproceedings{vlantis_quadrotor_2015,
	address = {Seattle, WA, USA},
	title = {Quadrotor landing on an inclined platform of a moving ground vehicle},
	isbn = {978-1-4799-6923-4},
	doi = {10.1109/ICRA.2015.7139490},
	booktitle = {2015 {IEEE} {International} {Conference} on {Robotics} and {Automation} ({ICRA})},
	publisher = {IEEE},
	author = {Vlantis, Panagiotis and Marantos, Panos and Bechlioulis, Charalampos P. and Kyriakopoulos, Kostas J.},
	month = may,
	year = {2015},
	pages = {2202--2207},
}

@inproceedings{habas_ceilings_2025,
	address = {Atlanta, GA, USA},
	title = {From {Ceilings} to {Walls}: {Universal} {Dynamic} {Perching} of {Quadrotors} on {Surfaces} with {Variable} {Orientations}},
	copyright = {https://doi.org/10.15223/policy-029},
	isbn = {979-8-3315-4139-2},
	shorttitle = {From {Ceilings} to {Walls}},
	doi = {10.1109/ICRA55743.2025.11128577},
	booktitle = {2025 {IEEE} {International} {Conference} on {Robotics} and {Automation} ({ICRA})},
	publisher = {IEEE},
	author = {Habas, Bryan and Brown, Aaron and Lee, Donghyeon and Goldman, Mitchell and Cheng, Bo},
	month = may,
	year = {2025},
	pages = {288--294},
}

@article{greer_shrinking_2020,
	title = {Shrinking {Horizon} {Model} {Predictive} {Control} {Method} for {Helicopter}–{Ship} {Touchdown}},
	volume = {43},
	issn = {0731-5090},
	doi = {10.2514/1.G004374},
	number = {5},
	journal = {Journal of Guidance, Control, and Dynamics},
	publisher = {American Institute of Aeronautics and Astronautics},
	author = {Greer, William B. and Sultan, Cornel},
	month = may,
	year = {2020},
	pages = {884--900},
}

@article{rodriguez-ramos_deep_2019,
	title = {A {Deep} {Reinforcement} {Learning} {Strategy} for {UAV} {Autonomous} {Landing} on a {Moving} {Platform}},
	volume = {93},
	issn = {1573-0409},
	doi = {10.1007/s10846-018-0891-8},
	language = {en},
	number = {1},
	journal = {Journal of Intelligent \& Robotic Systems},
	author = {Rodriguez-Ramos, Alejandro and Sampedro, Carlos and Bavle, Hriday and de la Puente, Paloma and Campoy, Pascual},
	month = feb,
	year = {2019},
	pages = {351--366},
}


\addtolength{\textheight}{-12cm}   

\end{document}